  \newtheorem{theorem}{Theorem}[section]
  \newtheorem{corollary}[theorem]{Corollary}
  \newtheorem{definition}[theorem]{Definition}
\newcommand{\congruence}[3]{\ensuremath{{#1}\equiv {#2}\left(\bmod{#3}\right)}}
\newcommand{\dtime}{\mbox{\rm DTIME}}
\newcommand{\sharpp}{\mbox{\rm \#P}}
\newcommand{\sat}{\mbox{\rm SAT}}
\newcommand{\pe}{\mbox{\rm P}}
\newcommand{\p}{\mbox{\rm P}}
\newcommand{\np}{\mbox{\rm NP}}\newcommand{\NP}{\mbox{\rm NP}}
\newcommand{\optp}{\mbox{\rm OptP}}
\newcommand{\naturals}{\mathbb{N}}
\title{Team Diagonalization\thanks{This article will appear
as the Complexity Theory Column article in the 
September 2018 issue 
of~\emph{SIGACT News}~\protect\cite{hem-spa:jtoappear:team-diagonalization},
in
honor of the career of Professor Richard Ladner, who 
retired in 2017.}}
\author{Lane A. Hemaspaandra\thanks{This work was done in part
at ETH Z\"{u}rich's Department of
    Computer Science
    while on a sabbatical stay, which
was generously supported
    by that department.}\\
Dept.\ of Computer Science\\University of Rochester\\ Rochester, NY
  14627\\USA
\and
Holger Spakowski\thanks{This work was done in part while visiting the University of Rochester.}\\
Department of Mathematics\\~~~and Applied Mathematics\\
  University of Cape Town\\Rondebosch 7701, South Africa}
\date{July 28, 2018}
\begin{document}

\sloppy

\maketitle

\begin{abstract}
  Ten years ago, 
  Gla{\ss}er, Pavan, Selman, and Zhang~\cite{gla-pav-sel-zha:j:splitting} 
  proved that if $\pe \not= \np$, then all NP-complete
  sets can be simply split into two NP-complete
  sets.

That advance might naturally make one wonder about a quite different
potential consequence of NP-completeness:
Can the union of easy NP sets ever be hard?  In particular, can the
  union of  two
  non-NP-complete NP sets ever be NP-complete?  

Amazingly, Ladner~\cite{lad:j:np-incomplete} resolved 
this more than forty years ago:
If $\pe \not= \np$, then 
  all NP-complete sets can be
simply 
split into two non-NP-complete NP sets.
Indeed, this
  holds even when one requires the two non-NP-complete NP sets to be
  disjoint.

  We present this result as a mini-tutorial.  We give
   a relatively detailed proof
  of this result, using the same technique and idea
  Ladner~\cite{lad:j:np-incomplete} invented and used in proving a
  rich collection of results that include many that are more general
  than this result: delayed diagonalization.  In particular, the proof
  presented is based on what one can call \emph{team diagonalization} (or if 
one is being playful, perhaps even \emph{tag-team diagonalization}):
  Multiple sets are formed separately by delayed diagonalization, yet
  those diagonalizations are mutually aware and delay some of their
  actions until their partner(s) have also succeeded in some
  coordinated action.

  We relatedly note that, as a consequence of Ladner's result, 
if $\pe \not= \np$,
there exist OptP functions $f$ and $g$ whose 
\emph{composition} is NP-hard yet 
neither $f$ nor $g$ is NP-hard.
\end{abstract}

\section{Introduction}

This paper is about the fact 
that if $\pe \not= \np$, then NP-hardness can be
built by in a simple way by composing/combining non-NP-hard subparts.  

Our initial interest in this came from a question not about 
sets but about functions.
At a computational biology talk one of us attended, two actions were
sequentially taken on the input and the overall transformation was
clearly NP-hard.  During a discussion of the talk, the question came
up of whether that meant that at least one of the two composed
transformations must itself be NP-hard.  Although for the example of
that talk it probably was the case that one of the constituent
transformations could be directly proved NP-hard, 
what we eventually realized (not nearly as quickly as we should have)
is that (assuming $\pe \not= \np$) this is
not a general behavior.  In particular,
if $\pe \not=
\np$, then  there exist NP optimization functions (i.e., $\optp$ functions)
$f$ and $g$
such that 
neither $f$ nor $g$
is even NP-Turing-hard, yet $g(f(x))$ is (functional) NP-many-one-hard.
That is, non-NP-hard functions can via composition achieve NP-hardness.  

The natural way to show this is simply to observe that the function
case follows immediately from the language case, \emph{and the
  language case already put to bed in the 1970s by Ladner
  \cite{lad:j:np-incomplete}!}  So this paper presents, as a
mini-tutorial on delayed diagonalization, and in honor of Professor
Ladner's career on the occasion of his retirement, a proof 
of the language-case result, which is 
in fact a special case of one of his 1975 results.
(The paper  also provides the observation
that the (OptP) function case follows immediately from that.)

\subparagraph{Credit Where Credit Is Due and Blame Where Blame Is Due}
It is important to mention up front that, as this is a tutorial, the
credit for the results here belongs not to the authors of this
tutorial, but to the author of the underlying paper, namely Richard
Ladner.  All of the results of this paper are either explicitly in his
seminal paper on delayed diagonalization or are implicit from
or (for the case of 
Section~\ref{sect:functions}) follow easily from
its
results and techniques.  Similarly, though the particular proof write-up of
Theorem~\ref{thm:main} and the framing of
the theorem are our attempts to frame a clear yet very accessible 
proof for a tutorial article, it is very important to stress that 
all we are doing is employing Ladner's breakthrough technique, 
delayed diagonalization, that he developed to prove results of this
sort and of many related sorts, and indeed the proof is 
his in every sense other than that any errors our version might have,
which of course are ours.
In fact, in some sense we are 
giving a rather long proof to get a result 
weaker than ones that his original paper gets;
the reason for this is that our goal here is to be 
an expository paper, and to make as clear as possible, a particular
flavor of delayed diagonalization.  So in brief,
all the ideas here and much of the detail is 
due to the seminal paper of Ladner---except for 
any errors here, as those will
be due to flaws on our part in writing this tutorial.\footnote{Some Fine Print
and Disclaimers: Ladner's
article is in part focused on Turing reductions, as was common 
back then.  Yet it also weaves in explicitly, and implies,
a wide range of results about many-one reductions.  
In particular, the reader wanting to
see Ladner's 
far more general treatment of 
the type of thing we cover here is enthusiastically pointed to 
Ladner's seminal paper~\cite{lad:j:np-incomplete}, and 
in particular to, on his page~160, his Theorem~2 
which is a quite general version of the 
central result that we are presenting in this article, i.e., 
his 
Theorem~2 for the case of $A$ being any P set
and $B$ being an NP-complete (i.e., NP-many-one-complete)
set is, in effect, the special case being presented 
in Section~\ref{sect:sets} as 
Theorem~\ref{thm:main},
give or take the fact that 
in Theorem~\ref{thm:main} here we have put right into the 
theorem statement some details of the simple splitting's framework, 
namely, 
the polynomial-time function $r$ and how it controls the splitting.
Ladner's Corollary~2.1 on page~160 
of his paper gives the many-one analogue of 
his Theorem~2.}

\subparagraph{Organization} The rest of this article is organized as follows.  
Section~\ref{s:lit} mentions some related or contrasting 
results.
Section~\ref{s:defs} provides some preliminaries and definitions.
Section~\ref{sect:sets} contains our write-up, but using Ladner's 
delayed diagonalization, proving the result, due to Ladner,
that if P and NP differ, then there exists a pair of disjoint, non-NP-complete 
sets in $\np - \pe$ whose union is NP-complete.  
The
case of function composition described above will then follow easily,
and is covered in Section~\ref{sect:functions}, and 
Section~\ref{s:gen} provides a brief teaser for some of the 
rest of the world of results that Ladner's work provides and/or
underpins.

\section{Related and Contrasting 
Work}\label{s:lit}

The fascinating related work of Gla{\ss}er et
al.~\cite{gla-sel-tra-wag:j:union-disjoint} has a different focus.
That paper on the complexity of unions of disjoint sets primarily
focuses on whether unions of disjoint NP-complete sets remain hard.
That paper does have a section---Section 4.2 in that paper's
numbering---where the union of two disjoint sets is harder than its
components, but the results of that section do not overlap at all with
Ladner's~\cite{lad:j:np-incomplete} 
 work, due to Gla{\ss}er et al.'s focus on
 \emph{equivalent} (to each other) constituent sets.

Another related paper by Gla{\ss}er et al.~\cite{gla-pav-sel-zha:j:splitting} 
shows that every nontrivial 
(in the sense of the set and its complement each
containing at least two elements)
NP-complete set is (so-called)
m-mitotic.  This
result is interesting for us here because it implies that every nontrivial
NP-complete set can be partitioned into two P-separable sets that are
NP-complete.  However, while Ladner's work splits NP-complete sets
into 
NP-\emph{non}complete NP sets, 
Gla{\ss}er et al.\ are interested in
splitting NP-complete sets into NP-{\em complete} sets. 

We also mention the work of Hemaspaandra et
al.~\cite{hem-jia-rot-wat:j:join-lowers} that shows that the join
operator (the marked union operator) can yield a set of lower
complexity (in the extended low hierarchy) than either of its
constituents.  This regards a focus \emph{opposite that of 
Ladner's work presented in this paper}.
In particular, 
that work is about using 
combinations to lower complexity; the 
focus of the Ladner work that we are presenting 
is on using combinations to rise from
non-NP-completeness to NP-completeness.

We prove Ladner's key result using {\em team 
  diagonalization}.  Delayed diagonalization is the powerful technique
first used by Ladner~\cite{lad:j:np-incomplete} (see also,
e.g.,~\cite{koz:j:subrecursive,sch:j:uni,reg:j:diag-unif-fixed-point,for:j:diagonalization,reg-vol:j:gap}) 
to show that if $\pe \not= \np$, then there are incomplete
sets in $\np - \pe$.  
Many
people, rather naturally given which example one 
tends to see in courses and textbooks, think of delayed diagonalization 
as 
the technique of having one set handle
a list (actually two lists) of requirements by looking deeply into its
own history.  But in fact, delayed diagonalization is far more 
flexible and powerful than merely being able to do that.
In (what we here will call) 
team diagonalization, though, Ladner in effect has two
sets, each with its own list of requirements to
satisfy, but the two sets will take long turns as to which of
them is working on its requirements, and while one is doing that,
the other will politely remain simple and boring.  Loosely put, the
sets will each respect the goals of the other set, and
will take on burdens in a completely coordinated ``lock-step''
fashion. 

\section{Preliminaries}\label{s:defs}
For each string $x$, the number of characters in $x$ will be 
denoted $|x|$.
For each set $A$ and each natural number $k$, $A^{=k}$ will
denote $\{ x ~|~ x \in A \land |x| = k\}$.
We take all sets and classes to be with respect to 
the 
alphabet $\Sigma = \{0,1\}$.
The symmetric difference 
operation for sets,  
$(A-B)  \cup  (B-A)$,
will be denoted 
$A\bigtriangleup B$.

All logarithms in this paper are base two, e.g., $\log\log i$ means
$\log_2(\log_2(i))$.

We say $A \le^p_m B$ exactly if $A$ is polynomial-time many-one
   reducible to $B$ (i.e., there is a polynomial-time function
$f$ such that, for each $x$, it holds
that $x\in A \iff f(x)\in B$).
A set $B$ is NP-hard (with respect to $\le^p_m$
  reductions) exactly if for all $A \in \np$, $A \le^p_m B$.
A set $B$ is NP-complete (with respect to $\le^p_m$
  reductions) exactly if $B\in\np$ and $B$ is NP-hard.
We say $A \le^p_T B$ if $A \in \pe^B$.
A set $B$ is NP-hard with respect to $\le^p_T$
  reductions exactly if for all $A \in \np$, $A \le^p_T B$.
A set $B$ is NP-complete with respect to $\le^p_T$
  reductions exactly if $B\in\np$ and $B$ is NP-hard with 
respect to $\le^p_T$ reductions.

Sets $A$ and $B$ are said to be P-separable exactly if
there exists a set $L \in \pe$ such that $A \subseteq L \subseteq 
\overline{B}$, i.e., there is a polynomial-time set that is a 
(possibly nonstrict) superset 
of $A$ yet has no intersection with $B$.

 Given functions $f$ and $g$, we say that $f \le ^p_m g$ ($f$
 functional many-one reduces to $g$) if there are polynomial-time functions
 $h_1$ and $h_2$ such that, for all $x$, 
it holds that 
$f(x) = h_2(g(h_1(x)))$~\cite{zan:j:sharp-p}.
Functional many-one reductions are even more restrictive  than
metric reductions~\cite{kre:j:optimization} 
(which have almost the same definition except they 
allow $h_2$ to have direct access to $x$, i.e., the definition's key
part is 
$f(x) = h_2(x,g(h_1(x)))$)
and are most commonly studied in the context of
$\sharpp$-completeness in order to prove stronger completeness results
than mere $\sharpp$-metric-completeness or $\sharpp$-Turing
completeness.
For example, Valiant's
notion of $\sharpp$-completeness 
in his seminal 
paper~\cite{val:j:permanent} 
is the Turing-reduction notion, and for the permanent of $(0,1)$ matrices 
the reduction he builds is not a many-one reduction
(although for the permanent of $(-1,0,1,2,3)$ matrices
Valiant
does establish 
$\sharpp$-many-one-completeness);
$\sharpp$-many-one-completeness for the permanent of $(0,1)$-matrices
was obtained only more than a decade later, 
by 
Zank{\'{o}}~\cite{zan:j:sharp-p}.
As another example,
Deng and Papadimitriou's~\cite{den-pap:j:solution-concepts}
proof that the Shapley--Shubik power index 
is $\sharpp$-metric-complete was later
strengthened to a $\sharpp$-many-one-completeness 
result~\cite{fal-hem:j:power-compare}.

Let $\chi_A$ denote the characteristic function of $A$, that is,
$\chi_A(x)$ equals 0 if $x\not\in A$ and equals 1 if $x\in A$.
We say a function $g$ is NP-hard exactly if, for every
$A\in\np$, $\chi_A \le^p_m g$.  (Note that this is equivalent to
$\chi_{SAT} \le^p_m g$.) We say that a function $g$ is NP-Turing-hard
exactly if, for all $A\in\np$, $A \in \pe^g$ (equivalently, $\sat \in \pe^g$).

We remark that in the literature when ``NP-hard functions'' are
mentioned, the term often means what we here call NP-Turing-hard.
However, for clarity, in this paper we always use, for both sets and
functions, the terms NP-hard (for the many-one case) and
NP-Turing-hard (for the Turing case).  
The key theorems statements here---in particular,
Theorems~\ref{thm:main},
\ref{t:optp},
and
\ref{thm:exp}---are 
stated for the most demanding choices regarding many-one versus
Turing reductions (even when this requires mixing and 
matching within the theorem statements, as indeed happens in 
each of these three theorem statements), 
and so they imply the weaker choices. 

\section{Splitting NP-Complete Sets into NP-Incomplete Sets}  \label{sect:sets}

The following captures Ladner's 
beautiful insight that
if $\pe \not= \np$, then each NP-complete set
can be built from (or looked at from the 
other direction, partitioned into) 
two non-NP-complete NP sets, in a very simple way.

\begin{theorem}%
\label{thm:main}
If 
$\pe \not= \np$ and 
$S$ is an 
$\np$-complete set,
then there is a function $r: \naturals \rightarrow \naturals$ 
such that, for all $n$, $r(n)$ can be computed in time polynomial in
$n$ and the following disjoint sets $A$ and $B$ 
belong to $\np - \pe$, satisfy
$A \cup B = S$,
 and 
are not $\np$-hard even under
polynomial-time Turing reductions:
\begin{equation} \label{eq:def-of-A}
  A = \left\{ x ~|~ x \in S \mbox{ and } r(|x|) \mbox{ is even} \right\}
\end{equation}
and 
\begin{equation}  \label{eq:def-of-B}
  B = \left\{ x ~|~ x \in S \mbox{ and } r(|x|) \mbox{ is odd} \right\}.
\end{equation}
\end{theorem}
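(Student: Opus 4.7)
The plan is to construct $r$ by Ladner's delayed-diagonalization recurrence, letting the successive values $r(n)=0,1,2,3,\dots$ each be ``in charge'' of one rotating diagonalization requirement, cycled through as $r\bmod 4$. Fix standard enumerations $M_1,M_2,\ldots$ of clocked deterministic polynomial-time machines and $N_1,N_2,\ldots$ of clocked polynomial-time oracle Turing machines. When $r=4i$, the goal is to defeat $M_i$ as a candidate $\pe$-algorithm for $A$; when $r=4i+1$, to defeat $M_i$ as a candidate $\pe$-algorithm for $B$; when $r=4i+2$, to defeat $N_i$ as a candidate $\le^p_T$-reduction from $\sat$ to $B$; and when $r=4i+3$, the analogous for $A$. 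The parities are chosen to match the needs: for the $A$-versus-$\pe$ requirement we want $A$ to carry $S$ on the active length range (even $r$), while for the $A$-versus-Turing-hardness requirement we want $A$ empty on the active range (odd $r$) so that every oracle query of $N_i$ returns ``no,'' making $N_i^A$ behave like an unconditional polynomial-time machine on large enough inputs.

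The recurrence is the usual delayed one: set $r(0)=0$; to get $r(n+1)$, recompute $r(n)$ and then brute-force-search, within a budget polynomial in $n$, for a concrete witness that the requirement indexed by $r(n)$ has been met---for a $\pe$-type requirement, a short string $y$ (of length at most $\log\log n$, say) on which $M_i$ disagrees with $\chi_S$; for a Turing-hardness requirement, a short $\sat$ instance on which $N_i^B$ (or $N_i^A$) returns the wrong value, checking each oracle query by brute-forcing $S$-membership and consulting the already-computed parity of $r$ at the query length. Because $|y|\le\log\log n$, the NP brute-force over $S$-witnesses costs only $2^{\mathrm{polylog}(n)}=\mathrm{poly}(n)$. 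If a witness is found, increment: $r(n+1)=r(n)+1$; otherwise $r(n+1)=r(n)$. By construction $r$ is nondecreasing and polynomial-time computable, membership in $A$ (resp.\ $B$) reduces to an NP check of $S$-membership plus a polynomial-time parity test of $r(|x|)$, and the disjointness and $A\cup B=S$ conditions in~(\ref{eq:def-of-A}) and~(\ref{eq:def-of-B}) hold by definition.

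The remaining job is to show that $r$ is unbounded, which simultaneously gives $A,B\notin\pe$ and that neither $A$ nor $B$ is $\np$-Turing-hard. This is a contradiction argument. If $r$ were eventually stuck at some value $v$, then from some length $n_0$ onward one of $A,B$ agrees with $S$ cofinitely and the other is cofinitely empty. If $v=4i$, then $A$ agrees with $S$ cofinitely, and the fact that the brute-force search never finds $M_i$ disagreeing with $\chi_A$, together with the polynomial-time computability of $r$, forces $A\in\pe$, hence $S\in\pe$, contradicting $\pe\ne\np$. If $v=4i+2$, then $B$ is cofinitely empty, so $N_i^B$ is ultimately a polynomial-time machine, and ``$r$ stuck'' means this machine ends up agreeing with $\chi_{\sat}$ on all sufficiently large inputs, giving $\sat\in\pe$, again a contradiction. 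The cases $v=4i+1$ and $v=4i+3$ are symmetric.

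The main obstacle---and the ``team'' aspect of the construction---is calibrating the brute-force witness-length and the pacing of $r$ so that simultaneously (i) the recurrence defining $r$ stays polynomial-time computable, and (ii) long enough stretches of constant $r$ accumulate that the oracle machines $N_i$ really do see a cofinitely-empty $A$ (respectively $B$) across their whole query window at the lengths where we probe for witnesses. The $\log\log n$ scale of the witness length makes (i) easy, and the nondecreasing character of $r$, together with the fact that exactly one of $A,B$ is ``live'' per stage while the other politely absorbs $S$, handles (ii): in every phase the two sets coordinate in lock-step, precisely the team-diagonalization flavor the introduction highlights.
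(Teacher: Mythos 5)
Your proof is correct and uses the same core delayed-diagonalization machinery (in particular the $\log\log n$-depth search and polynomial budget), but you chose a heavier decomposition of the requirement list than the paper does. You cycle through four requirement types per index $i$: defeat $M_i$ as a P-decider for $A$; defeat $M_i$ as a P-decider for $B$; defeat $N_i$ as a $\sat$-to-$B$ Turing reduction; defeat $N_i$ as a $\sat$-to-$A$ Turing reduction. The paper uses only the last two. It never directly diagonalizes to force $A, B \notin \pe$; instead, after establishing that neither $A$ nor $B$ is NP-Turing-hard, it deduces $A \notin \pe$ (symmetrically $B \notin \pe$) in a single line: if $A \in \pe$ then $\pe^S \subseteq \pe^B$ (since $S = A \cup B$), and NP-completeness of $S$ would make $B$ NP-Turing-hard, contradiction. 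This buys a leaner recurrence with a single parity bit per stage rather than a four-way phase. Your version is redundant but not wrong: the ``not in P'' requirements are implied by the Turing-hardness ones together with NP-completeness of $S$, and nothing you add breaks the pacing or the time bounds.

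One internal inconsistency worth flagging: in describing the recurrence you say the $\pe$-type phase searches for $y$ on which $M_i$ disagrees with $\chi_S$, but in the stuck-at-$v=4i$ contradiction argument you (correctly) compare $M_i$ against $\chi_A$. For your four-requirement plan to yield $A,B\notin\pe$ directly the search target must be $\chi_A$ (resp.\ $\chi_B$), computed by brute-forcing $S$-membership and consulting the already-computed parity of $r$ at length $|y|$, exactly as you do for the oracle queries in the Turing phases. If one really searches against $\chi_S$ those two phases are always promptly satisfied (since $S\notin\pe$), so they accomplish nothing, and one falls back on the paper's union argument to obtain $A,B\notin\pe$ anyway.
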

\begin{proof}
Suppose $\pe \not= \np$.  Let 
$M_1, M_2, \ldots $ be a standard enumeration of deterministic
oracle Turing machines, each of which is explicitly polynomially
clocked (upper-bounded) independently of its oracle. 
We assume, w.l.o.g., that the enumeration and the clocking are such
that there is a universal oracle Turing machine
$\mathcal{U}$ such that the following is true:
\begin{enumerate}
\item For each $j \geq 1$, each $x$, and each
$A$,   $\mathcal{U}^A(x,j)$ simulates
$M^A_j(x)$ (in the sense that $x \in L(M_j^A) \iff 
x \in L(\mathcal{U}^A(x,j))$), and 
\item For each $j \geq 1$,
  each $x$, and each $A$, $\mathcal{U}^A(x,j)$ runs in time at 
most $|x|^j + j$.\footnote{Technical aside (that we suggest ignoring 
during a first reading): The w.l.o.g.s~in that 
sentence and a few lines above are the kind of somewhat painful
groundwork that is often skipped over---as we also will mostly do
here.  But, to touch for those interested on what is 
under the hood:  The natural way to build a nice clocked enumeration is
to pair each machine (from a truly standard enumeration of all Turing
machines) together with every possible clock drawn from some nice family of
polynomials that for every polynomial has at least one member that
majorizes that polynomial over all natural numbers.  The typical
family used for that is $n^k + k$.  One needs to then interlace and
assign numbers in the enumeration so that even the earliest members of
the enumeration meet their own claimed time bounds (such as 
that the $k$th machine will run in time $n^k + k$), and doing that
often involves delaying (not at all in the same sense
of the word as in delayed diagonalization) 
when machines come in (i.e., making sure their location in 
the enumeration has a high enough number) or even making some
machines in our list dummy machines that ignore everything and in one
step halt.  And in doing what was just described, one has to account
for the fact that clocking the time of a machine itself has overhead,
since one puts a supervisor on top of the machine; but the overhead is
mild---certainly at most polynomial.  Beyond that, here we want to be
able to uniformly simulate any given machine on the fly, and so one
also has to take into account the cost of the universal machine's own
simulation of machines---itself also incurring a mild overhead---so
that whatever claim one wants to make 
about the universal machine's running times is correct.
Despite that, all the w.l.o.g.~claims above indeed can be routinely
achieved.  Important in that is that each (deeply) underlying machine is
paired with polynomials (infinitely many)
greater than particular given polynomial $p$, and so
basically each machine appears infinitely often on the listing, and indeed
occurs with any particular needed polynomial ``headroom for
simulation'' even on top of the underlying time cost.  So if an
(deeply) underlying machine with a given oracle $A$ does happen to run in some
polynomial time, one of the machines in our enumeration will capture
that in such a way that even its simulation within the universal
machine will duplicate, under oracle $A$, the action of the underlying
machine in terms of acceptance and rejection, and will do so 
relative to our time claims without 
being cut off by time issues.}  
\end{enumerate}

Let $S$ be an an arbitrary NP-complete set.  Let $c\ge 1$ be a constant such that $S
\in \dtime(2^{n^c})$.
Let $A$ and $B$ be defined by Eqns.~(\ref{eq:def-of-A}) and (\ref{eq:def-of-B});
of course, for those definitions to be meaningful, we will
need to define $r$.
In particular, we 
will now define $r$ such that $r$ is nondecreasing, and
$r(n)$ can be computed in time polynomial in $n$.  
It follows, keeping in mind
that $S\in \np$, that $A \in \np$ and 
$B \in \np$.  It also follows, from the definition of 
$A$ and $B$, that $A$ and $B$ are disjoint and satisfy
$A \cup B = S$.
In addition, our definition of $r$ will be such that 
the other claims in the conclusion of Theorem~\ref{thm:main} 
hold, namely, that neither $A$ nor $B$ belongs to $\p$, 
and neither $A$ nor $B$ is NP-hard even under 
polynomial-time Turing reductions.

Let $r(0) = r(1) = r(2) = 2$.
To define $r$, we describe a procedure that, for any $i \ge 2$, computes the value 
$r(i+1)$ in time polynomial in $i$ based on the values the
values $r(0), r(1), \ldots , r(i)$.

\medskip

\noindent {\bf Computation of  $\boldsymbol{r(i+1)}$ based
  on $\boldsymbol{r(0), r(1), \ldots , r(i) \; (i \ge 2):}$}

When we determine $r(i+1)$, we try to diagonalize against machine
$M_{\lfloor   r(i)/2\rfloor}$. If $r(i)$ is odd then we try to make sure that 
$M_{\lfloor   r(i)/2\rfloor}$ does not decide SAT with the help of
oracle $A$ and if $r(i)$ is even then we try to make sure that 
$M_{\lfloor   r(i)/2\rfloor}$ does not decide SAT with the help of
oracle $B$. 
If we succeed then we set $r(i+1) = r(i)+1$.
Otherwise, we set $r(i+1) = r(i)$.  (Then 
the same machine and oracle will be tried to diagonalize against  
when $r(i+2)$ is determined.) 

If 
\begin{equation} \label{eq:1}
  2^{{\left( (\log\log i)^{\lfloor r(i)/2 \rfloor } +  \lfloor r(i)/2 \rfloor  \right)}^c}  \ge i
\end{equation}
then the diagonalization fails, and we set $r(i+1) = r(i)$. 
Otherwise there are two cases, as follows.
(As one reads the cases, the fact that $A$ and $B$ are being used
 in text that in part is also creating them may seem circular.  But why
 this is not 
fatally  naughty is explained in the discussion justifying
 the correctness of the construction.)
\begin{description}
\item[Case 1: $\boldsymbol{r(i)} $ is odd.]

We try to diagonalize against $M_{\lfloor r(i)/2\rfloor}$ with oracle $A$.
Determine if there exists
a string $y$ of length at most $\log\log i$ satisfying
\begin{equation} \label{eq:diag-against-A}
  y \in \sat \Longleftrightarrow y \notin L(M_{\lfloor r(i)/2\rfloor}^A).
\end{equation}
If such a $y$ exists then
$\sat$ is not polynomial-time Turing reducible to 
$A$ via oracle machine $M_{\lfloor r(i)/2\rfloor}$.
Hence the diagonalization is successful, and so we set $r(i+1) = r(i)+1$.
Otherwise, we set $r(i+1) = r(i)$.  
\item[Case 2: $\boldsymbol{r(i)}$ is even.]

We try to diagonalize against $M_{\lfloor r(i)/2 \rfloor}$ with oracle $B$.
Determine if there exists
a string $y$ of length at most $\log\log i$ satisfying
\begin{equation} \label{eq:diag-against-B}
  y \in \sat \Longleftrightarrow y \notin L(M_{\lfloor r(i)/2\rfloor}^B).
\end{equation}
If such a $y$ exists then
$\sat$ is not polynomial-time Turing reducible to 
$B$ via oracle machine $M_{\lfloor r(i)/2\rfloor }$.
Hence the diagonalization is successful, and so 
we set $r(i+1) = r(i)+1$.
Otherwise, we set $r(i+1) = r(i)$. 
\end{description}
For any given $n$, we compute $r(n)$ as follows:
For $i = 2, 3, \ldots , n-1$, we successively compute $r(i+1)$
as described above
based on the values $r(0), r(1), \ldots , r(i-1), r(i)$.
Note that if for each $i \in \{ 2, 3, \ldots , n-1\}$ this is possible in
time polynomial in $i$, then $r(n)$ can be computed in time
polynomial in~$n$.

\bigskip

It remains to show that this construction is correct.

\begin{enumerate}
\item The construction of $r(i+1)$ in cases 1 and 2 above obviously depends on
  the oracle sets $A$ and $B$, which according to Eqn.~(\ref{eq:def-of-A}) and
  Eqn.~(\ref{eq:def-of-B}) depend on $r$ being odd or even.  However,
  the construction is not circular: To determine whether 
  $y \notin L(M_{\lfloor r(i)/2\rfloor}^{(\cdot)})$, we only need to
  know the oracle up to length
  $(\log\log i)^{\lfloor r(i)/2 \rfloor } + \lfloor r(i)/2 \rfloor$, which is smaller than $i$ by Eqn.~(\ref{eq:1}).
Hence the only $r(k)$ values that $r(i+1)$ may depend on 
are values $r(k)$ for $k$ less than than $i$.
\item\label{correct-time}
 We 
argue---and recall that, as noted above, showing this 
establishes
that 
$r(n)$ can be computed (from scratch) in time polynomial in~$n$---that 
for each $i\ge 2$, 
the procedure
  that determines $r(i+1)$ based on $r(0), r(1), \ldots , r(i-1), r(i)$
  runs in time polynomial in $i$.
  To this end, we have to show that 
  the conditions in
  Eqns.~(\ref{eq:diag-against-A}) and~(\ref{eq:diag-against-B}) can be
  checked for all $y$ with $|y| \le \log\log i$ in time polynomial in
  $i$.

  First, for each $y$, checking whether $y \in \sat$ can be done by brute
  force in time polynomial in
  $i$ since $y$ is of length at most $\log \log i$.

  Second, the running time of $M_{\lfloor r(i)/2\rfloor}^{(\cdot)}$ on inputs
  of length $\log \log i$ is at most  
  $t = (\log\log i)^{\lfloor r(i)/2 \rfloor } + \lfloor r(i)/2 \rfloor$.
  Hence the length of each oracle query $q$ is at most
  $t$.
  Since $S \in \dtime(2^{n^c})$, there exists a constant 
$s > 0$ such that 
we can determine 
in time
  \[
  s\cdot 2^{t^c} \leq s \cdot 2^{{\left( (\log\log i)^{\lfloor r(i)/2 \rfloor } + \lfloor r(i)/2 \rfloor  \right)}^c}
  \]
whether $q \in S$.
And since we got past the test of Eqn.~(\ref{eq:1}),
we thus know that the time used is at most $s \cdot i$.
  Finally, note that no more than $(2\log i) -1 $ 
different strings
  $y$ have to be checked in whichever one of 
  Eqns.~(\ref{eq:diag-against-A}) or~(\ref{eq:diag-against-B}) applies. 
  Hence
  the whole procedure of determining $r(i+1)$ based on 
  $r(0), r(1), \ldots , r(i)$
  takes time polynomial in $i$. 
\item
  In the above construction, we try to diagonalize against machine
$M_{\lfloor   r(i)/2\rfloor}$ when we determine $r(i+1)$.
  Hence to check that we eventually
  diagonalize
  against all deterministic polynomial-time oracle Turing machines
  $M_j^{(\cdot)}$, we only have to show that $r(i)$---which, recall,
is nondecreasing---grows indefinitely.

  Suppose that there exist $k$ and $n_0$ such that $r(n) = k$ for all $n \ge
  n_0$.  Then there are two cases:
  \begin{description}
   \item[Case 1: $\boldsymbol{k}$ is odd.]\ \\
     Then for all $i \ge n_0$ it holds that, for all $y$ with $|y| \le \log\log i$:
     \begin{align*}
         y \in \sat & \Longleftrightarrow y \in L(M_{\lfloor
           r(i)/2\rfloor}^A), \mbox{ and}
     \end{align*}
     \begin{align*}
          y \in L(M_{\lfloor
           r(i)/2\rfloor}^A) &
                    \Longleftrightarrow y \in L(M_{\lfloor
           k/2\rfloor}^A).
\end{align*}
So for every string $y$, it holds that 
$         y \in \sat \Longleftrightarrow 
y \in L(M_{\lfloor
           k/2\rfloor}^A)$.
     Furthermore, by construction, $A$ contains in this case only finitely many strings.
     It follows that SAT can be decided in polynomial time, which
     contradicts our assumption that $\pe \not= \np$.
  \item[Case 2: $\boldsymbol{k}$ is even.]\ \\
     Then for all $i \ge n_0$ it holds that, 
for all $y$ with $|y| \le \log\log i$:
     \begin{align*}
         y \in \sat &\Longleftrightarrow y \in L(M_{\lfloor
           r(i)/2\rfloor}^B) \mbox{ and}
     \end{align*}
     \begin{align*}
 y \in L(M_{\lfloor
           r(i)/2\rfloor}^B)
                    &\Longleftrightarrow y \in L(M_{\lfloor
           k/2\rfloor}^B).
     \end{align*}
So  for every string $y$, it holds that 
$         y \in \sat \iff 
y \in L(M_{\lfloor
           k/2\rfloor}^B)$.
     Furthermore, by construction, $B$ contains in this case only finitely many strings.
     It follows that SAT can be decided in polynomial time, which 
     contradicts our assumption that $\pe \not= \np$.

\end{description}

\item  Finally, we must argue that $A \not\in\pe$ and $B\not\in\pe$.   
(Recall that, at the start of the proof, we 
pointed out that both $A$ and $B$ belong to NP, that $A$ and $B$ are 
disjoint, and that $A\cup B = S$.  
So we do not need to re-argue those points here.)
Suppose for example that 
$A \in \pe$.   That implies, since $S = A \cup B$, that 
$\pe^S \subseteq \pe^B$.  So the fact that $S$ is NP-complete 
certainly yields that $B$ is NP-Turing-hard.  But that 
contradicts the fact that, as argued above, $B$ is not 
NP-Turing-hard.  So $A\not\in \pe$.  

By the analogous 
argument, it also holds that $B \not\in \pe$.~\qedhere
\end{enumerate}
\end{proof}

One has the following easy corollaries.
(For disjoint sets $A_1$ and $A_2$, note that 
$A_1\cup A_2 = A_1 \bigtriangleup A_2$, so one could equally well
in the theorems below make the theorems be stated not about a union
($A_1\cup A_2$)
but about 
a symmetric difference
($A_1 \bigtriangleup A_2$).)
\begin{corollary} \label{coroll:1}
$\pe \not= \np$ if and only if for each $\np$-complete set $S$ there
exist disjoint $\np$ sets $A_1 \subseteq S$ and $A_2 \subseteq S$ such
that $S = 
A_1 \cup A_2
$ and neither $A_1$ nor $A_2$ is $\np$-complete.
\end{corollary}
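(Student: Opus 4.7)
The corollary is a biconditional, so I would split the argument into the two directions.

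The forward direction, $\pe \neq \np \Rightarrow$ the splitting exists, follows immediately from Theorem~\ref{thm:main}: for any NP-complete $S$, take $A_1 = A$ and $A_2 = B$ as produced by the theorem. These are disjoint NP subsets of $S$ whose union is $S$, and they are not NP-hard even under Turing reductions (hence certainly not NP-complete under $\le^p_m$).

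For the reverse direction I would argue the contrapositive: assume $\pe = \np$, and show that no NP-complete set $S$ admits such a splitting. The first step is the folklore observation that under $\pe = \np$, the NP-complete sets are exactly the \emph{nontrivial} P sets, i.e., those $L$ with $L \neq \emptyset$ and $L \neq \Sigma^*$. Indeed, $\emptyset$ and $\Sigma^*$ are never many-one-complete for NP (since any many-one reduction to $\emptyset$ forces the source to be $\emptyset$, and similarly for $\Sigma^*$), while any nontrivial $L \in \pe$ is a target for a trivial polynomial-time many-one reduction from any other set, and in the world $\pe = \np$ every NP set is in P.

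Now fix an NP-complete $S$ and a candidate splitting $S = A_1 \cup A_2$ with $A_1, A_2 \in \np$ disjoint. Under $\pe = \np$, both $A_1$ and $A_2$ lie in P. If neither is NP-complete, then by the characterization above each of $A_1, A_2$ must be $\emptyset$ or $\Sigma^*$. A short case check finishes: if some $A_i = \emptyset$, then $S$ equals the other $A_j$, which is either $\emptyset$ (impossible, since $S$ is nontrivial) or $\Sigma^*$ (impossible, since $\Sigma^*$ is not NP-complete); if some $A_i = \Sigma^*$, then $A_1 \cup A_2 = \Sigma^*$ forces $S = \Sigma^*$, again contradicting NP-completeness of $S$. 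Either way we reach a contradiction, so such a splitting cannot exist, completing the contrapositive.

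The only mildly delicate step is the characterization of NP-completeness under $\pe = \np$, and in particular the handling of the two trivial sets $\emptyset$ and $\Sigma^*$; the rest is a two-line case analysis. No new diagonalization is needed, since all the real work has already been done in Theorem~\ref{thm:main}.
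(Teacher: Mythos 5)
Your proposal is correct and follows essentially the same route as the paper: the forward direction is an immediate consequence of Theorem~\ref{thm:main}, and the reverse direction uses, as the paper does, the observation that under $\pe = \np$ every nontrivial set is $\np$-complete, so a disjoint non-$\np$-complete splitting of an $\np$-complete $S$ is impossible. The only cosmetic difference is that your case analysis includes the case $A_i = \Sigma^*$, which is vacuous since $A_i \subseteq S \subsetneq \Sigma^*$; the paper streamlines this by noting directly that the only non-$\np$-complete subset of such an $S$ is $\emptyset$.
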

\begin{proof}
The direction from left to right follows 
directly from
Theorem~\ref{thm:main}.

Now suppose that $\pe = \np$.  Let $S$ be any set such that $S \not=
\Sigma^*$ and $S \not=\emptyset$.  Then $S$ is NP-complete.  The only
subset of $S$ that is not NP-complete is the empty set.  Hence there
are no NP-incomplete sets $A_1 \subseteq S$ and $A_2 \subseteq S$ such that $A_1
\cup A_2 = S$.~\end{proof}
If $\pe \not= \np$ then the sets $A_1$ and $A_2$ in
Corollary~\ref{coroll:1} are both not in P\@.

\begin{corollary} \label{coroll:2}
If $\pe \not= \np$, then every $\NP$-complete set $S$ has the property that
there is a $\pe$ set $D$ such that both $S \cap D$ and $S \cap
\overline{D}$ are $\NP$-incomplete (i.e., are in $\NP$ yet are not $\NP$-complete).

That is, for every $\NP$-complete set $S$ there exist $\pe$-separable sets
$A_1$ and $A_2$ such that $S = A_1 \cup A_2
$ and neither $A_1$ nor
$A_2$ is $\NP$-hard.
\end{corollary}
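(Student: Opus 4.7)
The plan is to apply Theorem~\ref{thm:main} directly and simply exhibit the separating polynomial-time set. Given an NP-complete set $S$, invoke Theorem~\ref{thm:main} to obtain the polynomial-time computable function $r:\naturals\rightarrow\naturals$ along with the sets $A$ and $B$ defined by Eqns.~(\ref{eq:def-of-A}) and~(\ref{eq:def-of-B}). Then define
\[
D = \{ x ~|~ r(|x|) \mbox{ is even} \}.
\]
Since $r(n)$ is computable in time polynomial in $n$ (and in particular in $|x|$ when given $x$), $D \in \pe$.

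Next, I would verify that $D$ witnesses the P-separability of $A$ and $B$ in the sense of the definition from Section~\ref{s:defs}. By Eqn.~(\ref{eq:def-of-A}), every $x \in A$ satisfies $r(|x|)$ even, so $A \subseteq D$. By Eqn.~(\ref{eq:def-of-B}), every $x \in B$ satisfies $r(|x|)$ odd, so $B \cap D = \emptyset$, i.e., $D \subseteq \overline{B}$. Hence $A \subseteq D \subseteq \overline{B}$, which is exactly P-separability. The same $D$ also yields the first formulation of the corollary: $S \cap D = A$ and $S \cap \overline{D} = B$.

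Finally, I would appeal directly to the conclusions of Theorem~\ref{thm:main} to finish: $A,B \in \np$, $A \cup B = S$, and neither $A$ nor $B$ is NP-hard even under polynomial-time Turing reductions (and in particular neither is NP-hard under $\manyonered$), so neither is NP-complete. Since both are in NP yet not NP-hard, both are NP-incomplete, giving both formulations of the corollary.

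There is really no main obstacle here: the whole point is that Theorem~\ref{thm:main} was stated in a form that already packages the polynomial-time ``address function'' $r$, so the separator $D$ falls out immediately. The only step that requires any care is checking the direction of the containments $A \subseteq D \subseteq \overline{B}$ against the P-separability definition, which is a matter of unwinding the parities in Eqns.~(\ref{eq:def-of-A}) and~(\ref{eq:def-of-B}).
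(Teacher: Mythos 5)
Your proposal is correct and follows essentially the same route as the paper: the paper likewise sets $D = \{x \mid r(|x|) \text{ is even}\}$ and observes that $A = S\cap D$ and $B = S\cap\overline{D}$ are exactly the sets of Theorem~\ref{thm:main}. You simply spell out the (routine) checks that $D\in\pe$ and that $A\subseteq D\subseteq\overline{B}$, which the paper leaves implicit.
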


\begin{proof}
Let
\[
  D = \left \{ x \in 
\Sigma^*  ~|~ r(|x|) \mbox{ is even} \right\}.
\]
Note that
$A = S \cap D$ and $B = S \cap \overline{D}$ are the sets $A$ and $B$
in Theorem~\ref{thm:main}.~\end{proof}

We 
mention a result by 
Gla{\ss}er et
al.~\cite{gla-pav-sel-zha:j:splitting} that is similar in spirit to
Corollary~\ref{coroll:2} even though it is trying to achieve the opposite.
Gla{\ss}er et
al.~\cite{gla-pav-sel-zha:j:splitting} call a set 
$A$ nontrivial exactly if 
both $A$ and $\overline{A}$ contain at least two elements.
Gla{\ss}er et al. showed (among other things) that every
nontrivial set $A$ that is NP-complete is also m-mitotic. This easily
implies the following theorem.

\begin{theorem}[\cite{gla-pav-sel-zha:j:splitting}] \label{thm:glasser-splitting}
If $\pe \not= \np$, then every $\NP$-complete set $S$ has the property that
there is a $\pe$ set $D$ such that both $S \cap D$ and $S \cap
\overline{D}$ are $\NP$-complete.
\end{theorem}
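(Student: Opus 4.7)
The plan is to invoke the Gla{\ss}er et al.~m-mitoticity theorem as a black box, just as the sentence preceding the statement suggests. Recall that a set $A$ is \emph{m-mitotic} exactly if there exists $D \in \pe$ such that $A \equiv^p_m A\cap D \equiv^p_m A\cap \overline{D}$. So once we have such a $D$ for our NP-complete set $S$, the sets $S\cap D$ and $S\cap \overline{D}$ are each polynomial-time many-one equivalent to $S$ and hence NP-hard; they also lie in $\np$, being intersections of an NP set with a P set; so they are NP-complete, which is the desired conclusion.

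The only thing I have to check, in order to apply the Gla{\ss}er et al.~theorem, is that $S$ is nontrivial in their sense, i.e., that both $S$ and $\overline{S}$ contain at least two elements. This is where the hypothesis $\pe \neq \np$ enters. Because $S$ is NP-complete and $\pe \neq \np$, it follows that $S \notin \pe$, and in particular $S$ is neither a finite set nor a cofinite set (every finite or cofinite set is in $\pe$). Hence both $S$ and $\overline{S}$ are infinite, and \emph{a fortiori} each contains at least two elements. So $S$ is nontrivial, and the Gla{\ss}er et al.~result yields a $\pe$ set $D$ with $S \equiv^p_m S\cap D \equiv^p_m S\cap \overline{D}$.

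Putting the pieces together: $S\cap D \in \np$ since $S\in\np$ and $D\in\pe$; NP-hardness of $S\cap D$ follows from $S\le^p_m S\cap D$ together with NP-completeness of $S$; therefore $S\cap D$ is NP-complete. Symmetrically, $S\cap \overline{D}$ is NP-complete. This gives the second assertion of the theorem directly; the first assertion is just a restatement of the second with $A_1 = S\cap D$ and $A_2 = S\cap \overline{D}$, whose union is $S$ and which are P-separable (witnessed by $D$ itself, since $A_1\subseteq D \subseteq \overline{A_2}$).

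The only substantive obstacle is the Gla{\ss}er et al.~m-mitoticity theorem, which we are using as a black box; everything else is a short bookkeeping argument. In particular, there is no diagonalization to carry out here, in contrast with Theorem~\ref{thm:main}: all the work is done inside the cited theorem, and the role of the $\pe\neq \np$ hypothesis in our argument is confined to guaranteeing that $S$ is neither finite nor cofinite and thereby satisfies the nontriviality hypothesis needed to apply that theorem.
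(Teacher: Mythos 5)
Your proof is correct and matches the paper's approach exactly: the paper leaves the derivation implicit with the phrase "This easily implies the following theorem," and you supply precisely the expected bookkeeping---using $\pe \neq \np$ to rule out finite or cofinite NP-complete sets (guaranteeing nontriviality so the m-mitoticity theorem applies), then noting $S\cap D$ and $S\cap\overline{D}$ are in $\np$ and inherit NP-hardness from $S\le^p_m S\cap D$ and $S\le^p_m S\cap\overline{D}$. (One small slip: the theorem has a single assertion, not two; you appear to be echoing the two-clause phrasing of Corollary~\ref{coroll:2}, but this does not affect the argument.)
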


While Corollary~\ref{coroll:2} shows (if $\pe \not= \np$)
that every NP-complete set $S$ can be split into P-separable sets that
are {\em not} $\np$-complete, Gla{\ss}er et al.'s theorem implies
that 
(regardless of whether or not 
$\pe \not= \np$)
every nontrivial 
NP-complete set $S$ can be split into P-separable sets such that
both sets {\em are} indeed NP-complete.

Note that regarding $\p = \np$ we have the following.

\begin{theorem} \label{thm-was-coroll:3}%
If $\pe = \np$, then no $\NP$-complete set $S$ has the property that
there is a $\pe$ set $D$ such that both $S \cap D$ and $S \cap
\overline{D}$ are $\NP$-incomplete (i.e., are in 
$\NP$ yet are not $\NP$-complete).
\end{theorem}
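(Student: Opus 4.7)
The plan is to exploit the severe collapse of $\np$-completeness that occurs under $\pe = \np$. First I would observe that under this hypothesis every nontrivial set in $\pe$---that is, every $B \in \pe$ that contains at least one element and excludes at least one element---is $\np$-complete under $\leq^p_m$ reductions: given any $A \in \np = \pe$, one decides $A$ in polynomial time and maps $x$ to a fixed element of $B$ if $x \in A$ and to a fixed element of $\overline{B}$ otherwise. Conversely, only subsets of $\emptyset$ many-one reduce to $\emptyset$ (and similarly for $\Sigma^*$), so neither $\emptyset$ nor $\Sigma^*$ is $\np$-hard. Hence under $\pe = \np$ the only $\np$-incomplete $\np$ sets are $\emptyset$ and $\Sigma^*$.

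Next I would take $S$ to be any $\np$-complete set and $D \in \pe$ to be arbitrary. By the characterization just described, $S$ itself is nontrivial, so $\emptyset \neq S \neq \Sigma^*$. Both $S \cap D$ and $S \cap \overline{D}$ lie in $\pe \subseteq \np$, so the only way for either of them to be $\np$-incomplete is for it to equal $\emptyset$ or $\Sigma^*$.

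The main step is to derive a contradiction from the supposition that \emph{both} $S \cap D$ and $S \cap \overline{D}$ are $\np$-incomplete. Each would then be $\emptyset$ or $\Sigma^*$; but $S \cap D \subseteq S$ and $S \neq \Sigma^*$, forcing $S \cap D = \emptyset$, and symmetrically $S \cap \overline{D} = \emptyset$. Taking the union gives $S = (S \cap D) \cup (S \cap \overline{D}) = \emptyset$, contradicting the nontriviality of $S$. I do not foresee any real obstacle here; the entire argument is elementary once one recognizes that under $\pe = \np$ the two trivial sets are the only $\np$-incomplete $\np$ sets, and that a nontrivial set cannot be written as a union of two trivial sets each contained in it.
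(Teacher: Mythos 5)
Your proposal is correct and follows essentially the same route as the paper's proof: under $\pe = \np$ the only $\np$-incomplete sets are $\emptyset$ and $\Sigma^*$, and an $\np$-complete $S$ (being neither $\emptyset$ nor $\Sigma^*$) cannot be the union of two such trivial sets. The paper just phrases the final contradiction slightly more compactly (the union of two sets drawn from $\{\emptyset,\Sigma^*\}$ is again trivial, hence $\np$-incomplete, whereas $S$ is $\np$-complete), while you route it through the containment $S\cap D\subseteq S$; these are the same idea.
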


\begin{proof}
Suppose $\pe = \np$.  Then the only NP-incomplete sets are $\emptyset$
and 
$\Sigma^*$.
But the only sets that can be formed by the union of two sets 
chosen from $\{\emptyset,\Sigma^*\}$ are $\emptyset$ and $\Sigma^*$,
which as just mentioned are NP-incomplete, yet the theorem's 
claim is that $S$ is NP-complete.~\end{proof}
Note that Corollary~\ref{coroll:2} and Theorem~\ref{thm-was-coroll:3} are not
converses of each other.  They are actually stronger than just giving
an ``if and only'' statement.

The following notes that the sets being used are not merely 
disjoint in the sense that no string participates in both of the sets,
but also differ so strongly that no length participates in both of 
the sets.

\begin{definition}
  Two sets $A_1$ and $A_2$ are \emph{strongly disjoint} exactly if, for every
  $k$,  $A_1^{=k} = \emptyset$ or $A_2^{=k} = \emptyset$.
\end{definition}

\begin{corollary} \label{coroll:4}
$\pe \not= \np$ if and only if for each $\NP$-complete set $S$ there
exist strongly disjoint, $\pe$-separable, $\NP$ sets $A_1 \subseteq S$ and $A_2 \subseteq S$ such
that $A_1 \cup A_2 = S$ and neither $A_1$ nor $A_2$ is $\NP$-complete.
\end{corollary}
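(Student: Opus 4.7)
The plan is to derive this corollary almost entirely from Theorem~\ref{thm:main}, by observing that the sets $A$ and $B$ built there already have the strong-disjointness and P-separability properties, with essentially no extra work required. I would first handle the forward direction by setting $A_1 = A$ and $A_2 = B$ as constructed in Theorem~\ref{thm:main}. That theorem already gives: $A_1,A_2 \in \np$, $A_1,A_2 \subseteq S$, $A_1 \cup A_2 = S$, the disjointness of $A_1$ and $A_2$, and the stronger fact that neither is NP-hard even under polynomial-time Turing reductions (in particular, neither is NP-complete).

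The key new observations concern the two remaining adjectives. For \emph{strong disjointness}, note that for each length $k$ the integer $r(k)$ has a single fixed parity, so at most one of the two defining conditions ``$r(|x|)$ even'' and ``$r(|x|)$ odd'' can be satisfied by strings of that length. Hence for every $k$, at least one of $A_1^{=k}$ and $A_2^{=k}$ is empty, which is exactly the definition of strong disjointness. For \emph{P-separability}, reuse the set $D = \{x \in \Sigma^* \mid r(|x|) \text{ is even}\}$ from the proof of Corollary~\ref{coroll:2}. Since $r(n)$ is computable in time polynomial in $n$, $D \in \pe$; and by construction $A_1 \subseteq D$ and $D \cap A_2 = \emptyset$, i.e., $A_1 \subseteq D \subseteq \overline{A_2}$, which is the definition of P-separability.

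For the reverse direction, I would argue exactly as in Corollary~\ref{coroll:1}: if $\pe = \np$, then the only non-NP-complete sets are $\emptyset$ and $\Sigma^*$; taking any NP-complete $S$ (necessarily $S \neq \Sigma^*$), any $A_1, A_2 \subseteq S$ that avoid NP-completeness must both be $\emptyset$, giving $A_1 \cup A_2 = \emptyset \neq S$, a contradiction. So the existence of such a splitting for any one NP-complete $S$ already forces $\pe \neq \np$.

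There is no real obstacle here: strong disjointness is essentially baked into the construction, because $r$ depends only on the length of $x$ and the splitting is controlled purely by the parity of $r$ at that length. The only thing to take care of is to name the polynomial-time separator explicitly and cite the already-established properties of $A$ and $B$ rather than re-proving them.
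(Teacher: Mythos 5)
Your proposal is correct and follows essentially the same route the paper takes: the forward direction instantiates Theorem~\ref{thm:main} and observes that $A$ and $B$ are strongly disjoint (since $r(k)$ has a single parity for each length $k$) and P-separable (via the separator $D = \{x \mid r(|x|) \text{ even}\} \in \pe$, exactly as in Corollary~\ref{coroll:2}); the reverse direction is the same counting argument the paper delegates to Corollary~\ref{coroll:1}. You have merely spelled out the details the paper compresses into ``it is easy to see.''
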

\begin{proof}
The direction from left to right follows easily from
Theorem~\ref{thm:main} because it is easy to see that the sets $A$ and
$B$ in that theorem are strongly disjoint P-separable NP-sets.

The direction from right to left follows directly from Corollary~\ref{coroll:1}.~\end{proof}

\section{Composition of Functions: Hard Functions Composed from
  Nonhard Ones} \label{sect:functions}

Consider NPTMs where each nondeterministic path outputs one value
(say, a string over the alphabet $
\Sigma^*
$).  Paths that do not
explicitly output a value are by convention viewed as outputting
$\epsilon$, the empty string.  A function $f$ is said to be in 
$\optp$~\cite{kre:j:optimization}
exactly if  there is a thus-viewed NPTM $N$ such that, for
each $x$, $f(x)$ is the lexicographically maximum value among all values 
output by paths of $N$ on input $x$.
(By lexicographical order, we mean as is standard the order in which
$\epsilon < 0 < 1 < 00 < 01 < 10 < 11 < 111 < \dots$.
$\optp$ is often viewed as having codomain $\naturals$ rather than
$\Sigma^*$, 
but by the natural bijection, the views are the same.)

We now observe, as an easy consequence of the main result of
the previous section, that the composition of non-NP-hard 
NP optimization functions can achieve
NP-hardness.  Let $g \circ f$ denote the composition of
the functions, i.e., the function defined by, for 
each $x$, $(g \circ f)(x) = g(f(x))$.

\begin{theorem}\label{t:optp}
If $\pe \not= \np$, then there exist $\optp$ functions $f$ and $g$
such that 
neither $f$ nor $g$ is $\NP$-hard (or
even $\NP$-Turing-hard), yet $g \circ f$ is an $\NP$-hard $\optp$ function.
\end{theorem}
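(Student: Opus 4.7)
My plan is to apply Theorem~\ref{thm:main} to an arbitrary NP-complete set $S$ (say $\sat$), obtaining disjoint NP sets $A$ and $B$ with $A \cup B = \sat$ such that neither $A$ nor $B$ is NP-Turing-hard. The strategy is to have $f$ tag its input with a single bit that encodes membership in $A$, and then let $g$ use that tag together with membership in $B$ to decide $\sat$. Concretely, I define $f(x) = x1$ if $x\in A$ and $f(x) = x0$ otherwise, and define $g(y\cdot 1) = 1$ and $g(y\cdot 0) = \chi_B(y)$, with any convenient default on strings outside this format (e.g., $g(\epsilon)=0$).

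Both $f$ and $g$ are in $\optp$. The NPTM for $f$ on input $x$ guesses a candidate certificate for $x\in A$ on each path; a verifying path outputs $x1$, a non-verifying path outputs $x0$. Since $x1 > x0$ in the lexicographic order described in the paper, the maximum is $x1$ iff $x\in A$, so the NPTM realizes $f$. The NPTM for $g$ on input $y\cdot b$ outputs $1$ deterministically when $b=1$, and when $b=0$ guesses a certificate for $y\in B$, outputting $1$ on verifying paths and $0$ otherwise, so its max realizes $g$.

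For the composition, $(g\circ f)(x) = 1$ when $x\in A$ and $(g\circ f)(x) = \chi_B(x)$ when $x\notin A$, so $(g\circ f)(x) = \chi_\sat(x)$ for every $x$. Hence $g\circ f\in\optp$ (since $\chi_\sat\in\optp$ as $\sat\in\np$), and it is NP-hard as a function: for any $A'\in\np$, a standard many-one reduction $A'\le^p_m\sat$ supplies $h_1$ with $\chi_{A'}(x) = \chi_\sat(h_1(x)) = (g\circ f)(h_1(x))$, so $\chi_{A'}\le^p_m (g\circ f)$ via $h_2$ the identity.

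It remains to see that neither $f$ nor $g$ is NP-Turing-hard. For $f$: a polynomial-time machine with oracle $A$ computes $f(x)$ by a single query, so $\pe^f\subseteq\pe^A$; if $f$ were NP-Turing-hard then $\sat\in\pe^f\subseteq\pe^A$, contradicting Theorem~\ref{thm:main}'s guarantee that $A$ is not NP-Turing-hard. The argument for $g$ is symmetric, using a single oracle query to $B$ when the input ends in $0$. The only real obstacle is the bookkeeping that the OptP machines output exactly the intended values on all paths (including on malformed guesses and on default inputs) and that the lex-max convention yields the desired function; once that is checked, the result follows immediately from Theorem~\ref{thm:main}.
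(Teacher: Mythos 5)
Your proposal is correct and takes essentially the same approach as the paper: apply Theorem~\ref{thm:main} to $\sat$ to obtain non-NP-Turing-hard $A$ and $B$ with $A\cup B = \sat$, build $f$ from $A$ (tagging $x$ with a membership bit) and $g$ from $B$ so that $g\circ f = \chi_\sat$, then observe $\pe^f\subseteq\pe^A$ and $\pe^g\subseteq\pe^B$. The only (cosmetic) difference is that the paper encodes the membership bit as the first character of $f$'s output (via $f(x) = 1^{|x|+1}$ if $x\in A$, else $0x$) whereas you encode it as the last character.
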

\begin{proof}
  Let $S = \sat$ and let $A$ and $B$ be disjoint NP sets such that neither $A$ nor $B$ is
  NP-Turing-complete and $A \cup B=S$.  Such sets exist according to
  Theorem~\ref{thm:main}. 

  We define $f$ and $g$ as follows: \medskip

  \noindent For each $x \in 
\Sigma^*,$ \medskip 

  $
     f(x) = 
      \begin{cases}
         1^{|x|+1}  & \mbox{if } x \in A\\
         0x & \mbox{otherwise,} 
      \end{cases}$ \medskip

  \noindent  and for each $z \in 
\Sigma^*$,\medskip

  $  g(z) = 
      \begin{cases}
         1  & \mbox{if } (z\mbox{'s first bit is a 1})\mbox{ or } (z = 0x \mbox{ and } x \in B) \\
         0 & \mbox{otherwise.} 
      \end{cases}
  $ \medskip

  \noindent It is easy to see that for each $x \in 
\Sigma^*$,\medskip

  $
     g(f(x)) = 
      \begin{cases}
         1  & \mbox{if } x \in A\cup B \\
         0 & \mbox{otherwise.} 
      \end{cases}
  $\medskip

  Hence $\chi_{SAT} \le^p_m g\circ f$, and therefore $g\circ f$ 
is NP-hard under
  polynomial-time many-one reductions.  Furthermore, $f$, $g$, and $g\circ f$
  are easily seen to be $\optp$ functions.

  It remains to show that $f$ and $g$ are not NP-Turing-hard.  

  Suppose $f$ is
  NP-Turing-hard.  Then there exists a polynomial-time oracle machine $M$
  such that $\sat = L(M^f)$. Since given $A$ as an oracle it is easy to compute
  $f$, it follows that there is a polynomial-time oracle Turing
  machine $M'$ such that $\sat = L(M'^A)$.  But this is a
  contradiction because we assumed that $A$ is not NP-Turing-hard.

  In the same way, we can show that $g$ is not NP-Turing-hard.~\end{proof}

\section{Generalizations and Variants}\label{s:gen}

Theorem~\ref{thm:main} can be generalized/varied in many ways.
Please see Ladner's original paper~\cite{lad:j:np-incomplete} for 
results 
in such 
a general form that this can be done very 
broadly.  But for 
now let us mention a few examples. 
First, the theorem holds for many classes other than $\np$ 
(because Ladner proves it in a very general setting, see 
especially p.~160 
of his paper), for
instance for PSPACE, EXP, EXPSPACE, EEXP, $\Sigma_i^p$, $\Pi_i^p$, etc.
Second, Theorem~\ref{thm:main} says that we can split $S$ into two
sets that are NP-incomplete.  Clearly, a 
straightforward adaptation of the proof
shows that for any integer $k>1$, $S$ can be split into $k$
incomplete sets.  As an example, we mention a variation of 
Theorem~\ref{thm:main} for the case of EEXP, with things split into three
incomplete sets.  Of course, here one does 
not need any assumption of the form $\pe
\not= \np$.
\begin{theorem} \label{thm:exp}
Let $S$ be any 
{\rm{}EEXP}-complete set.
Then there is a function $r: \naturals \rightarrow \naturals$ 
such that for all $n$, $r(n)$ can be computed in time polynomial in
$n$ and the following pairwise-disjoint 
$\rm EEXP$ sets $A$, $B$, and $C$
satisfy $S = A\cup B \cup C$
and are not {\rm{}EEXP}-hard under
polynomial-time Turing reductions:
\[
  A = \left\{ x ~|~ x \in S \land (  \congruence{r(|x|)}{0}{3} ) \right\},
\]
\[
  B = \left\{ x ~|~ x \in S \land (  \congruence{r(|x|)}{1}{3} ) \right\}, 
\]
and
\[ 
  C = \left\{ x ~|~ x \in S \land (  \congruence{r(|x|)}{2}{3} ) \right\}.
\]
\end{theorem}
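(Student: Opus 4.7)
The plan is to mirror the proof of Theorem~\ref{thm:main} almost verbatim, widened from two sets to three, with the role of the $\pe \not= \np$ hypothesis taken over by the unconditional time-hierarchy fact that $\text{EEXP} \not\subseteq \pe$. I would fix an EEXP-complete target $T$ (for concreteness, $T = S$ itself works) and a constant $c$ such that both $S$ and $T$ lie in $\dtime(2^{2^{n^c}})$. As before, the index of the polynomial-time oracle machine attacked when computing $r(i+1)$ is $j = \lfloor r(i)/3 \rfloor$, but now the three-way cycling over which oracle to defeat is governed by $r(i) \bmod 3$.

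The one subtle design choice is which oracle to attack at step $i$, and the right rule is to target a set that will \emph{not} gain length-$i$ strings during the current phase---so that, if $r$ ever stabilized, the attacked oracle would end up finite. Concretely I would use: if $r(i) \equiv 0 \pmod{3}$ attack $M_j^B$; if $r(i) \equiv 1 \pmod{3}$ attack $M_j^C$; if $r(i) \equiv 2 \pmod{3}$ attack $M_j^A$. A success (finding a short $y$ with $y \in T \iff y \notin L(M_j^{(\cdot)})$) sets $r(i+1) = r(i) + 1$, else $r(i+1) = r(i)$; and, exactly as in Theorem~\ref{thm:main}, a time-budget clause analogous to Eqn.~(\ref{eq:1}) aborts the attempt as a failure whenever the brute-force cost of the $T$- and $S$-membership tests would exceed a polynomial in $i$. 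Since $T$ and $S$ lie in $\dtime(2^{2^{n^c}})$, the length bound on $y$ must be calibrated to this doubly exponential cost, for example $|y|$ at most an iterated logarithm of $i$, so that $2^{2^{(|y|^j + j)^c}} \le i$.

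The four verifications then track those in the proof of Theorem~\ref{thm:main}. Non-circularity: the budget clause forces every oracle query length to be below $i$, so the relevant portion of each of $A, B, C$ is determined by $r(0), \ldots, r(i-1)$. Polynomial-time computability of $r(i+1)$: the budget clause guarantees that each $T$-test and each $S$-query costs polynomially in $i$, and only polylogarithmically many candidate $y$ need be examined. $r \to \infty$: if $r$ stabilized at some $k$, then by choice of the cycle the set being attacked (say $B$ when $k \equiv 0 \pmod{3}$) would eventually be denied new members, so it would be finite and thus in $\pe$; perpetual failure would then force $T = L(M_{\lfloor k/3 \rfloor}^B)$ and hence $T \in \pe^B = \pe$, contradicting the EEXP-completeness of $T$ via the time hierarchy theorem (the $k \equiv 1,2 \pmod 3$ cases being symmetric). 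The remaining claims---$A, B, C$ pairwise disjoint with union $S$, each in $\text{EEXP}$, and none $\text{EEXP}$-hard under $\le^p_T$---fall out exactly as in Theorem~\ref{thm:main}, the last by observing that every pair consisting of a machine $M_j$ together with an oracle drawn from $\{A, B, C\}$ is eventually targeted as $r$ climbs through all three residue classes.

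The main obstacle is this length-bound calibration: it must be small enough that brute-force membership tests in the doubly exponential-time sets $T$ and $S$ still fit within polynomial time in $i$, yet grow unboundedly so that every (machine, oracle) pair is eventually defeated. This is the EEXP analogue of the tension Ladner balanced in the original proof, displaced upward by one ``log'' level to accommodate the doubly exponential time of $S$ and $T$; once that bound is pinned down, the rest is a mechanical three-way adaptation of the construction of Theorem~\ref{thm:main}.
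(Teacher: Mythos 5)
Your proposal matches the paper's approach exactly: the paper proves Theorem~\ref{thm:exp} only by a one-sentence remark that one follows Theorem~\ref{thm:main} with three residue classes mod $3$ and strings $y$ of length roughly triple-logarithmic in $i$, and your write-up fills in precisely those details correctly (three-way cycling on $r(i)\bmod 3$, attacking the oracle not currently gaining strings, the budget clause recalibrated to the doubly exponential cost of deciding $S$, and the time hierarchy theorem replacing the $\pe\neq\np$ hypothesis). No gaps.
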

The proof differs from the proof of Theorem~\ref{thm:main} in that
one now has to consider three different cases: $r(|x|) \equiv 0
\pmod{3}$, $r(|x|) \equiv 1 \pmod{3}$, and $r(|x|) \equiv 2 \pmod{3}$,
and further, the length of the string $y$ that one uses in one's 
diagonalization
must be of length 
roughly triple-logarithmic in $i$, that is, computations have to
look even more deeply back within the history.

\subparagraph*{Acknowledgments.}  We thank Max Alekseyev and Daniel
\v{S}tefankovi\v{c} for helpful conversations.  We are 
grateful to Richard 
Ladner for his work that this paper is presenting, and for his 
lifetime's rich range of central contributions
to computer science, 
which have 
inspired us and so many others.

\bibliographystyle{alpha}
%\bibliography{common}

%

%

\end{document}